\newtheorem*{remark}{Remark}
\DeclareMathOperator*{\argmax}{arg\,max}
\newtheorem{theorem}{Theorem}
\theoremstyle{definition}
\newtheorem{definition}{Definition}[section]
\newtheorem{lemma}[theorem]{Lemma}
\newcommand{\R}{\mathbb{R}}
\def\BibTeX{{\rm B\kern-.05em{\sc i\kern-.025em b}\kern-.08em
    T\kern-.1667em\lower.7ex\hbox{E}\kern-.125emX}}
\begin{document}

\title{Relay Protocol for Approximate Byzantine Consensus}
\author{
\IEEEauthorblockN{Matthew Ding}
\IEEEauthorblockA{
\textit{Westford Academy, MIT PRIMES}\\
Westford, USA \\
matthewding@berkeley.edu}
}

\maketitle

\begin{abstract}
Approximate byzantine consensus is a fundamental problem of distributed computing. This paper presents a novel algorithm for approximate byzantine consensus, called Relay-ABC. The algorithm allows machines to achieve approximate consensus to arbitrary exactness in the presence of byzantine failures. The algorithm relies on the usage of a relayed messaging system and signed messages with unforgeable signatures that are unique to each node. The use of signatures and relays allows the strict necessary network conditions of traditional approximate byzantine consensus algorithms to be circumvented.

We also provide theoretical guarantees of validity and convergence for Relay-ABC. To do this, we utilize the idea that the iteration of states in the network can be modeled by a sequence of transition matrices. We extend previous methods, which use transition matrices to prove ABC convergence, by having each state vector model not just one iteration, but a set of $D$ iterations, where $D$ is a diameter property of the graph. This allows us to accurately model the delays of messages inherent within the relay system.
\end{abstract}
\begin{IEEEkeywords}
consensus, networks, byzantine, relay
\end{IEEEkeywords}

\section{Introduction}
The idea of byzantine fault-tolerance was first introduced by Lamport et al. \cite{ByzantineGeneral}. Byzantine consensus has since become a large research topic, with applications such as blockchain technology \cite{Blockchain} and machine learning \cite{ByzantineML}. 

Dolev et al. \cite{Dolev} modified and extended the problem of byzantine agreement by introducing \textit{approximate Byzantine agreement}, allowing machines to reach approximate consensus rather than exact consensus. This was motivated by the fact that exact consensus in asynchronous systems was proven to be impossible \cite{Lynch}. Additionally, in synchronous systems, approximate byzantine consensus can be used to create algorithms that do not require complete knowledge of the network topology \cite{IABCPart1}.

This approximate byzantine consensus problem aims to have all honest machines converge to a single state within the convex hull of initial states as the number of iterations approaches infinity \cite{Dolev}. Vaidya \cite{Transition} utilizes a method describing the progression of states in the network using transition matrices to prove consensus.

The main contribution of this paper is to utilize two key tools that have seen a lot of use and success in traditional byzantine consensus problems: messages with unforgeable signatures \cite{DolevStrong, ByzantineGeneral,Dishonest} and relaying messages \cite{ByzantineGeneral,Relay}; our work is a generalization of \cite{Transition}, with signatures and relays used to circumvent certain network assumptions that would otherwise be necessary. To the best of my knowledge, this is the first time either of these two tools have been used in approximate byzantine consensus algorithms.

Byzantine consensus has had applications to machine learning by having machines perform gradient descent steps to minimize a loss function on top of consensus techniques. The combination of delays and signatures may have additional applications in byzantine gradient descent.  \cite{BRIDGE} extends approximate byzantine consensus algorithms for use in byzantine gradient descent methods. We leave the question as to whether Relay-ABC may have a similar extension to our future work.

\section{Problem Formulation}
\subsection{Definitions}
\subsubsection{Graph Notation and Definitions}
\begin{itemize}
    \item Let $m$ be the total number of machines (or "nodes"), $h$ the number of honest machines, and $b$ be the number of byzantine machines ($h+b=m$). 
    
    \item Let $B$ denote the set of byzantine nodes and $H$ denote the set of honest nodes. The set of all nodes $V$ is equal to $B \cup H$.
    
    \item Denote $N_{i}^{I}$ to be the set of all machines that have incoming edges from machine $i$. Denote $N_{i}^{O}$ to be the set of all machines that have outgoing edges to machine $i$
    
    \item Define $dist(i,j)$, for some $i,j \in V$ as the length of the shortest path from node $i$ to $j$
    
    \item In this paper, a "non-zero value" refers to a value that can be lower bounded by some positive constant. In the context of transition matrices, this value doesn't approach 0 as the number of iterations approaches infinity.
\end{itemize}

\subsubsection{Matrix Definitions}
Let $M$ denote some arbitrary matrix, and $M[t]$ denote some arbitrary matrix with respect to iteration $t$.
\begin{itemize}
    \item Transition matrix $M$ denotes a square matrix of size $hD\times hD$
    
    \item $M_{i}[t]$ denotes the $i$th row of matrix $M[t]$
    
    \item $M_{ij}[t]$ denotes the element at row $i$ and column $j$ of matrix $M[t]$
    
    \item We define the first row and column in every matrix as row/column 0
    
    \item Matrix Splicing: $M[a,b:c,d]$ denotes the submatrix spliced by top row $a$, bottom row $b$, left column $c$ and right column $d$ (all inclusive) of matrix $M$
    
    \begin{figure}
    
    $M=\left[\begin{matrix}
    0 & 1 & 2\\
    3 & 4 & 5\\
    6 & 7 & 8\\
    \end{matrix}\right]$
    \centering
    
    \medskip
    $M[0,1:0,1]=\left[\begin{matrix}
    0 & 1\\
    3 & 4\\
    \end{matrix}\right]$
    \centering
    \caption{Example of Matrix Splicing}
    \label{fig:5}
\end{figure}
    
    \item A non-zero column is a matrix column filled entirely with non-zero elements
\end{itemize}

\subsection{Decentralized Learning Model}
We consider a static, directed network $G(V, E)$, where $V = \{0,1,2,... m-1\}$ and $E$ representing communication links between neighboring nodes. If $(i, j) \in E$, then node $i$ may send messages to node $j$. 

Our protocol is analyzed in the synchronous communication setting, where communication occurs over a sequence of iterations. Messages sent during an iteration are guaranteed to be received by the intended recipient within a given finite amount of time. 

Each node $i\in V$ starts with an initial real-valued input. The goal of the protocol is to approach a state that satisfies the following two conditions:
\begin{itemize}
    \item Validity condition: After each iteration of the protocol, the state of each honest node remains within the convex hull of the initial inputs of all honest nodes.
    
    \item Convergence condition: The difference between the states of any two honest nodes approaches zero as the number of iterations approaches infinity
\end{itemize}

\subsection{Byzantine Failure Model}
Among the $m$ machines in the decentralized network, $b$ of them are byzantine machines. Byzantine machines may deviate arbitrarily from the protocol. For example, a byzantine machine may output any arbitrary real value, and send mismatching messages to each of its neighbors. However, a key restriction of byzantine nodes is that they cannot forge signatures of honest users.

\section{Relay-ABC Algorithm}
\subsection{Assumptions}
\theoremstyle{definition}
\begin{definition}{Honest Subgraph:}
Define the honest subgraph as the graph that is formed by removing all byzantine nodes and all edges connected to byzantine nodes in the original graph.
\end{definition}
\begin{itemize}
    \item The number of byzantine machines is strictly less than one-third the total number of machines ($b<\frac{1}{3}m$).
    
    \item We assume the honest subgraph is bidirectionally connected (there exists directed paths from every honest node to every other honest node).
    
    \item We assume the diameter of the honest subgraph is upper-bounded by $D$. 
\end{itemize}

\subsection{Our Contributions}
Our work is an extension of the work done in \cite{Transition}. Our network assumptions are much less restrictive. In particular, we assume no network connectivity assumptions besides the honest subgraph being bidirectionally connected, which is necessary for any algorithm to achieve consensus over all honest nodes. 

The goal of this protocol is to use a relay system to bypass traditional network connectivity assumptions outlined in \cite{IABCPart2}, which has a necessary but insufficient condition that each node has an indegree of at least $2b+1$. This requires that each honest node have at least $b+1$ incoming edges from honest neighbors. On the other hand, bidirectional connectivity of the honest subgraph may possibly be achieved when each honest node has a maximum indegree of as low as one honest neighbor. 

This comes at the tradeoff of higher communication costs, as now machines send to each other at most $m$ sets of parameters in each message, as opposed to one parameter in most other algorithms in literature. 

Our paper provides theoretical guarantees of validity and convergence to an approximate byzantine consensus algorithm with message delays, which to our knowledge has never been done before. Our work also introduces the use of unforgeable signatures. Signatures have seen lots of successful usage in standard byzantine consensus, but until now have not been used in approximate byzantine consensus methods.

\subsection{High-Level Idea}
Since the honest subgraph is bidirectionally connected, this allows all honest machines to receive signed messages from every other honest machine through a broadcast and relay system. Thus we create a pseudo-complete communication graph over the course of an entire phase of $D$ rounds. Since a complete graph does indeed satisfy the necessary conditions of \cite{IABCPart2}, our relay protocol achieves convergence as well.

We use a trimmed-mean aggregation step \cite{BRIDGE,Transition} to ensure byzantine robustness. The trimmed-mean step works by eliminating the greatest $b$ values and the smallest $b$ values, and then taking the arithmetic mean of the remaining values. By removing the greatest and least $b$ values, we ensure that the maximum and minimum values of the set of remaining values are both values of honest machines. This prevents byzantine machines from making the states of honest machines deviate arbitrarily.
\medskip

Each machine $i$ keeps track of their own vector $v_{i}$. This vector consists of $(v_{i}(0), v_{i}(1)... v_{i}(m-1))$, where \textbf{$v_{i}(j)$ represents machine $i$'s most recently updated record of machine $j$'s state.} $v_{i}$ may not always contain a state that was received from an actual message for each machine in the network, as machine $i$ may not have received a message from all machines. 

Machine $i$ only starts performing trimmed-mean steps after $D$ iterations, as this ensures that the first broadcast of all honest machines has had sufficient time to relay across the entire graph and reach every other honest node. This means that each honest machine is outputting an identical message, their initial input value, for the first $D$ iterations.

We choose the specific value $D$, the upper-bound of the diameter of the honest subgraph, so after $D$ iterations, all honest vectors will always contain more honest parameters than byzantine parameters. $D$ is in the worst-case $O(h)$, but with high probability is $O(\log h)$ in Erdos-Renyi random graphs \cite{RandomGraph}.

\medskip

For all honest machines $i$ and $j$, any state $v_{i}(j)$ in the vector will contain a valid signature from machine $j$, as well as an iteration marker, which shows which iteration the parameter was calculated on. Denote $T(v_{i}(j))$ to be the iteration that parameter $v_{i}(j)$ was calculated on.

See Algorithm \ref{Alg1} for the Relay-ABC algorithm.

\begin{algorithm} \label{Alg1}
\SetAlgoLined
    \begin{remark}
        This algorithm is implemented by a specific machine $i$. Each machine $i \in H$ will implement this algorithm concurrently.
    \end{remark}
    
\KwResult{Each state $v_{i}(i)$ converges to the same value within the convex hull of the initial states as Iteration $t \rightarrow \infty$}

    Initialization:
 
    $v_{i}(i)\gets$ Intial State of node $i$ (with signature $i$ and iteration marker $-1$).
    
    \medskip
 
    \For{Iteration $t\gets0$ \KwTo $T$}{
        Broadcast $v_{i}$ to all machines $j\in N_{i}^{O}$
        
        Receive $v_{j}$ from all machines $j\in N_{i}^{I}$
        
        \begin{remark}
            When receiving $v_{j}$, ignore all parameters received that are not properly signed or without a proper iteration marker. If no proper message is from a node, set their incoming value to be an arbitrary predefined real value (e.g. 0).
        \end{remark}

        $G_{i}\gets N_i^O \cup \{i\}$
        
        \For{$j\gets0$ \KwTo $m-1$}{
            \begin{remark}
                In the next two lines, we do the following: Out of all parameters $v(j)$ received from the broadcast step, set $v_{i}(j)$ to the value with the highest iteration marker.
            \end{remark}
            
            \If{$j \neq i$}{
                $g'\gets \argmax_{g:\{g \in G_{i}\}}  T(v_{g}(j))$
                
                $v_{i}(j)\gets v_{g'}(j)$
            }
        }
       
        \If{$t\geq D$}{
            \textbf{Trimmed-mean update step:} 
            
            \medskip
            In a new vector, sort the values of $v_{i}$ in increasing order:
            
            \begin{equation}
                v^*_{i} \gets sort(v_{i})
            \end{equation}
            
            Ignore the least and greatest $b$ values, and set the value of $v_{i}(i)$ to be the average of all remaining values in $v^*_{i}$, as defined below:
            
            \begin{equation} \label{update}
                v_{i}(i) \gets \frac{1}{m-2b} \sum_{k=b}^{m-b-1} v^*_{i}(k)
            \end{equation}
       
            Add signature $i$ and iteration marker $t$ to $v_{i}(i)$
        }
    }
 \caption{Relay-ABC}
\end{algorithm}

\section{Theoretical Guarantees}
\subsection{Overview}
In the following section, we prove that the Relay-ABC algorithm satisfies the validity and convergence conditions.

We define transition matrix $M[t]$ and construct it so that it models the state update of all honest nodes as defined in Algorithm 1. The transition matrices are then used to prove that Algorithm 1 guarantees convergence over all honest nodes.

\subsection{Matrix Definitions}
We introduce several definitions for matrices, most of which are adapted from \cite{Transition}.

To make analysis easier, we will differentiate between \textit{iterations} ($r$) and \textit{phases} ($t$). Define an \textit{iteration} as a single instance in time of communications. During each iteration, every node sends and receives messages from its neighbors. Define a \textit{phase} as a set of $D$ iterations. Therefore the $ith$ phase contains iterations $(i-1)D$ to $iD$. Note that the first iteration of the protocol is iteration 0, while the first phase is phase 1. Since the distance between any two honest nodes is at most $D$, any message from an honest node is guaranteed to reach all other honest nodes within $D$ iterations. We consider phases of $D$ iterations for theoretical analysis, not for any actual implementation in the protocol.

Denote $v[t]$ as the column vector consisting of the states of all honest nodes in phase $t$ (over all $D$ iterations). $\|v[t]\| = hD$, and $v_{rh + i}[t]$ represents the state of node $i$ at iteration $r$ of phase $t$.

Denote $v[0]$ to be the column vector consisting of the initial states of all honest nodes during the first $D$ iterations. Since all machines output the same identical message for the first $D$ iterations, $v[0]$ consists of $D$ identical $h\times1$ column vectors stacked on top of each other.

We express the iterative update of the state of a fault-free node $i \in H$ in any single phase using the matrix form below:
\begin{equation} 
    v_{i}[t] = M_{i}[t-1]*v[t-1]
\end{equation}

The row vector $M_{i}[t]$ satisfies the following conditions:

$M_{i}[t]$ is a stochastic row vector of size $hD$ (proven in Appendix \ref{appendixC}). Thus, $M_{ij}[t]\geq 0$ for $0\leq j \leq hD-1$, and
\begin{equation*}
    \sum_{j=1}^{hD} M_{ij}[t]= 1
\end{equation*}

\medskip

By stacking $hD$ stochastic row matrices $M_{i}[t]$ on top of one another, where M[t] is an $hD \times hD$ matrix, we can represent to state update of all honest nodes in a single matrix multiplication:
\begin{equation} \label{TransitionUpdate}
    v[t] = M[t-1]*v[t-1]
\end{equation}

The matrix M[t] models the state update detailed in Algorithm 1. We detail the specifics of the construction in Section \ref{matrixConstruction}.

Each element in a row of the transition matrix represents some weight of the state column of the previous $D$ iterations. We describe every update of (\ref{update}) of a single node during a single phase as a matrix row.

By repeating the transition matrix update of (\ref{TransitionUpdate}) $T+1$ times, we get:
\begin{equation}
    v[T] = \prod_{t=0}^{T} M[t]*v[0]
\end{equation}

Note that this is an extension of the transition matrix idea of \cite{Transition}. Instead of a $h \times h$ transition matrix and a $h \times 1$ state vector representing the update of a single iteration, we use an expanded transition matrix to describe the update for the set of the next phase ($D$ iterations) using the set of states from the previous phase.

\subsection{Transition Matrix Construction} \label{matrixConstruction}
In this section, we introduce how to construct transition matrices to exactly model the transition of states as dictated by the Relay-ABC algorithm (Algorithm \ref{Alg1}).

We define how to construct a given row of the transition matrix. We introduce some new definitions and notation, most of which is adapted from \cite{Transition}:

\medskip
Let us consider an arbitrary honest node $i$ performing the update step (\ref{update}) at some iteration.
Vector $v_{i}$ is the set of all most updated states from each machine known to machine $i$, and is the set of all values being considered in the trimmed-mean step. It is known that $\| v_{i} \| = m$. Define set $L$ and $S$ to be the largest and smallest $b$ values, respectively, in vector $v_{i}$. Let $N_{i}^{*}$ denote the set all all states that were not removed in the trimmed-mean update step. It is known that $L$ and $S$ are disjoint sets, $|L|=|S|=b$, $N_i^* = v_i - (L\cup S)$, and $\| N_{i}^{*} \| = m-2b$.

Denote $f$ to be the number of faulty states within $N_{i}^{*}$ (faulty states that were not trimmed away.) Define subsets $L^*$ and $S^*$ such that $L^*\subseteq L$, $S^*\subseteq S$, $|L^*|=|S^*|=X$, and that $L^*$ and $S^*$ consist of only honest states (which may be arbitrarily chosen.) It is shown in \cite{Transition} that such subsets always exist.

\begin{algorithm} \label{construction}
\SetAlgoLined
    \begin{remark}
        This algorithm describes how to construct row $t$ of $M$, the $hD \times hD$ transition matrix.
    \end{remark}

    Initialization:
    For all integers $k$ such that $0 \leq k<hD: M_{tk} \gets 0$
    
    \medskip
    
    \uIf{$f=0$}{
        $G \gets $ Case 1 Construction
    }
    \uElse{
         $G \gets $ Case 2 Construction
    }
    
    \medskip
    
    \For{Node $k \in V$}{
        \uIf{$k \neq i$}{
            $iter = t \bmod h - dist(i,k)$
        }
        \uElse{
            $iter = t \bmod h - 1$
        }
        
        $j = iter \bmod D$
            
        $j = j*h + k$
            
        \uIf{$iter < 0$}{
            \begin{equation} \label{sum1}
                M_{tj} = M_{tj} + G_{ik}
            \end{equation}
        }
        \uElse{
             \For{$v \gets0$ \KwTo $hD-1$}{
                \begin{equation} \label{sum2}
                    M_{tv} = M_{tv} + G_{ik}*M_{jv}
                \end{equation}
             }
        }
    }

 \caption{Row Construction Algorithm}
\end{algorithm}

\subsubsection{Matrix Construction Algorithm Overview}
To construct our matrix, we consider two cases: Case 1 where $N_{i}^{*}$ contains no states from faulty nodes ($f=0$), and Case 2 where $N_{i}^{*}$ contains states from at least one faulty node ($f>0$) \cite{Transition}. We describe our transition matrix by using elements of the transition matrices in \cite{Transition}.

Without loss of generality, we denote the first iteration of the given phase corresponding to the transition matrix as iteration 0. This is for simplicity, and in actuality all iterations will be shifted upwards by some positive multiple of $D$. Here we will describe how to construct row $M_t$, which represents iteration $\lfloor\frac{t}{h}\rfloor$, where $t < hD$. Assume without loss of generality that row $t$ is a Node $i$ row. 

In the first iteration of updates of a phase (represented by a single transition matrix), every value $v_{i}(j)$ in (\ref{update}) can be represented exactly by a value in the state vector (a state of the previous phase). However, we note that this is not true for any subsequent iteration: updates in iteration 1 may utilize states of iteration 0, which is a state of the current phase rather than the previous one.

Our paper's main contribution is to extend the transition matrix work of \cite{Transition}: any state from the current phase, rather than the previous state, may be represented as a convex combination of states from the previous phase. Specifically, the state of node $i$ at iteration $r$ of the current phase may be represented as
\begin{equation}
    \sum_{k=0}^{hD-1} M_{(rh + i)(k)} * v_{k}
\end{equation}

We prove that matrix $M$ is stochastic in both Case 1 and Case 2 in Appendix \ref{appendixC}.

\subsubsection{Case 1: $f=0$}
Define matrix $G$ to be the matrix constructed in Section 5.1.1 of \cite{Transition}, using our value of $v_i$ as $N_i^-$. $G$ is an $h\times h$ stochastic matrix.

Algorithm \ref{construction} is motivated by the fact that at iteration $t$, node $i$ is receiving a state from every other node. In particular, node $i$ receives the state of node $j$ from iteration $t - dist(i,j)$, as this will always be the most up-to-date iteration of node $j$ received by node $i$. We construct the matrix such that each state in $N_i^*$ is given an equal weight, as defined in (\ref{update}).

Since each node in $N_i^*$ is honest, each of its states can either be described by an element in a transition matrix, or a row of values in the transition matrix.

\subsubsection{Case 2: $f>0$}
Define matrix $G$ to be the matrix constructed in Section 5.1.2 of \cite{Transition}, using our value of $v_i$ as $N_i^-$. $G$ is an $h\times h$ stochastic matrix.

This construction is motivated by the fact every state in $N_{i}^*$ can we represented by a weighted average of two honest nodes, one in $L^*$ and one in $S^*$. This allows even the behavior of byzantine nodes to be able to be represented in the transition matrix of honest states.

\subsection{Validity Proof}
The update in (\ref{update}) of each node always results in a convex combination of some set of node states during some iterations. This means that any update will always stay within the convex hull of the set of initial input states, proving the validity condition.

\subsection{Convergence Proof}
\subsubsection{Matrix Characteristics}
\theoremstyle{definition}
\begin{definition}{Node $i$ Row:}
row $j$ of Matrix $M$ is considered a Node $i$ Row iff $j \bmod h \cong i$
\end{definition}

A Node $i$ Row represents the state update of node $i$ during some iteration of the phase.

To characterize what the weights of matrix rows representing iterations after iteration 0, we introduce the following observation:
\begin{theorem} \label{T1}
    If an element of the transition matrix $M_{ij}$ is a non-zero weight for $i < h$, then the value of $M_{zj}$ is non-zero as well, $\forall z$ such that $z$ is a Node $i$ row and $z\geq h$
\end{theorem}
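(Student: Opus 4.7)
The plan is to induct on the within-phase iteration index $r = \lfloor z/h \rfloor$ of the Node $i$ row $z$. Since $z \geq h$, we have $r \geq 1$, and the base case $r = 0$ is simply the hypothesis row $M_i$. The key structural ingredient comes from inspecting the $k = i$ branch of Algorithm \ref{construction}: when constructing a Node $i$ row $z$ corresponding to iteration $r \geq 1$, the offset for $k = i$ gives $iter = r - 1 \geq 0$, so the algorithm enters the else-branch and accumulates $G_{ii} \cdot M_{j,v}$ into $M_{z,v}$ for every column $v$, where $j$ is precisely the Node $i$ row representing iteration $r - 1$ (modulo $D$) of the phase. Consequently, the iteration-$r$ Node $i$ row dominates, coordinate by coordinate, $G_{ii}$ times the iteration-$(r-1)$ Node $i$ row.

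Next I would argue that $G_{ii}$ is bounded below by a strictly positive constant $\gamma$ that does not depend on the phase or on the byzantine behavior. In Case 1 ($f = 0$), node $i$'s own state survives trimming and the $G$ construction of Section 5.1.1 of \cite{Transition} assigns it weight $1/(m - 2b) > 0$. In Case 2 ($f > 0$), the paired-weighting construction of Section 5.1.2 of \cite{Transition} likewise places a non-zero self-weight on node $i$. With $G_{ii} \geq \gamma$ in hand, the recursion from the previous paragraph unfolds by a one-line induction to give $M_{z,v} \geq \gamma^{r} \cdot M_{i,v}$ for every Node $i$ row $z$ at iteration $r \leq D - 1$. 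Since $\gamma^{D-1}$ is a fixed positive constant, $M_{i,j}$ being a non-zero weight (lower-bounded by some $c > 0$ in the sense of the paper's definition) forces $M_{z,j} \geq \gamma^{D-1} c > 0$, which is exactly the non-zero weight condition we want.

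The main obstacle is the second step: uniformly lower-bounding $G_{ii}$ in Case 2, where $N_i^*$ may contain byzantine states and the row of $G$ is assembled from convex combinations involving $L^*$ and $S^*$. This requires carefully tracing the construction in Section 5.1.2 of \cite{Transition} to confirm that the weight ultimately assigned to the self-coordinate of node $i$ is independent of the adversary's choices of byzantine values. Once that lower bound is established, the recursion from Algorithm \ref{construction} makes the induction almost immediate, and the rest of the proof is routine bookkeeping.
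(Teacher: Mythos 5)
Your proof is correct and follows essentially the same route as the paper's: an induction on the within-phase iteration in which the Node $i$ row for iteration $r$ absorbs a positive multiple (namely $G_{ii}$ times) of the Node $i$ row for iteration $r-1$ via the $k=i$ branch of the construction, so a non-zero entry in column $j$ propagates down every Node $i$ row. The uniform lower bound on $G_{ii}$ that you flag as the main obstacle is already supplied by the paper's Appendix A, which records that $G_{ik} > \beta$ for $k = i$ in both Case 1 and Case 2, so nothing further is needed there.
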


\begin{proof}
    Row $i$ represents the matrix update of node $i$ in iteration 0, while row $z$ represents the matrix update of node $i$ in any iteration $[1,D]$. We denote $z = kh + i$, for some integer $k<D$.
    
    We proof the theorem with induction:
    
    Base case ($k=0$): If $k=0$, then $z=i$. The theorem is trivially true.
    
    Induction Step ($0<k<D$): We noted previously that not every update can be exactly expressed as a convex combination of weights from the previous phase. Specifically, node $i$ will always use the most up-to-date value of its own state, which is no longer a state of the previous phase. 
    
    However, this state may still be represented as a \textit{combination} of weights of the previous phase: in iteration $t$, node $i$ uses state of node $i$ in iteration $t-1$. Iteration $t-1$ of phase $a$ is represented by row $h(t-1)+i$ of matrix M:
    \begin{equation}
        v_{i}(i) = M_{h(t-1) + i} * v[a]
    \end{equation}
    
    Therefore, instead of using a single weight to denote the state of node $i$ in iteration $t-1$, we may instead add every single weight of row $h(t-1)+i$ to row $ht + i$, scaled by a factor of $\frac{1}{hD}$. The induction step is completed by setting $k$ as t.
\end{proof}

We now describe specific characteristics of which weights are non-zero in transition matrix $M$:

We first note the existence of a diagonal of non-zero values at the rightmost $h$ columns of the transition matrix.
\begin{theorem} \label{T2}
   $\forall k, i$ such that $k, i \in \R$, $0\leq k < D$ and $0 \leq i \leq h$, $M_{kh + i, h(D-1) + i}$ is a non-zero value.
\end{theorem}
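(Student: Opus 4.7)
The plan is to reduce Theorem~\ref{T2} to the base case $k = 0$ and then invoke Theorem~\ref{T1} to handle all larger $k$. For the base case I would trace Algorithm~\ref{construction} directly for the row $t = i$, which represents node $i$'s update in iteration $0$ of the current phase. Focusing on the inner-loop iteration where the processed node equals $i$ itself, the self-branch sets $iter = -1$, so that $j = ((-1)\bmod D)\cdot h + i = h(D-1) + i$; since $iter < 0$, equation (\ref{sum1}) fires and contributes $G_{ii}$ to $M_{i, h(D-1)+i}$. All weights in the construction are non-negative, so no cancellation is possible and a single positive contribution suffices.

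It then remains to argue $G_{ii} > 0$ in both cases of the construction of $G$ from \cite{Transition}. In Case~1 ($f = 0$), every element of $N_i^*$, including node $i$'s own state, receives equal positive weight $1/(m-2b)$, so $G_{ii} > 0$; in Case~2 ($f > 0$), node $i$'s self-contribution remains positive by the analogous portion of the construction in Section~5.1.2 of \cite{Transition}. Hence $M_{i, h(D-1)+i}$ is non-zero, settling the base case.

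For the general case $k \geq 1$, observe that row $kh + i$ is a Node $i$ row with $kh + i \geq h$, and by the base case $M_{i, h(D-1)+i}$ is non-zero with $i < h$. Applying Theorem~\ref{T1} with $j = h(D-1) + i$ then immediately yields $M_{kh+i,\, h(D-1)+i} \neq 0$, completing the argument.

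The main obstacle I expect is verifying that $G_{ii} > 0$ rigorously in both cases, since in principle node $i$'s own state could be among the top or bottom $b$ values of $v_i$ and be trimmed out of $N_i^*$ at the aggregation step. I would need to consult the construction in \cite{Transition} to confirm that node $i$ always retains a strictly positive self-weight in $G$. If that fails, the fallback would be to locate a different node $k$ whose contribution lands in the same column $h(D-1) + i$ under the mod-$D$ indexing of the self branch, which would require a more delicate case analysis of the relay structure but should still go through because the self branch is the only mechanism routing weight into columns corresponding to the immediately preceding iteration.
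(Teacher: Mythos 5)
Your proposal is correct and follows essentially the same route as the paper: the base case $k=0$ is established by the self-weight of node $i$ landing in column $h(D-1)+i$ (the paper phrases this as ``each honest node always uses the state of the previous iteration in its update''), and the case $k>0$ is lifted via Theorem~\ref{T1}, exactly as you do. The obstacle you flag about $G_{ii}>0$ is resolved by the cited construction from \cite{Transition}: the paper's Appendix~\ref{appendix A} records that $G_{ik}>\beta$ for all $k\in N_i^*\cup\{i\}$ in Case~1 and $k\in (N_i^*\cap H)\cup\{i\}\cup L^*\cup S^*$ in Case~2, so the self-index always carries strictly positive weight even when node $i$'s own value is trimmed.
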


\begin{proof}
    For $k=0$, row $kh+i$ represents the update of node $i$ during iteration 0, or the first iteration of the previous phase. Column $h(D-1) + i$ represents the state of node $i$ during iteration $D-1$ of the previous phase, or the last iteration of the previous phase. Since each honest node always uses the state of the previous iteration in its update, this matrix value is non-zero.
    
    For $k>0$, the result generalizes from Theorem \ref{T1}.
\end{proof}

\begin{figure} [ht]

$\left[\begin{matrix}
0 & 0 & \frac{1}{3} & \boldsymbol{\frac{1}{3}} & \frac{1}{3} & 0\\[6pt]
0 & 0 & 0 & \frac{1}{3} & \boldsymbol{\frac{1}{3}} & \frac{1}{3}\\[6pt]
\frac{1}{3} & 0 & 0 & 0 & \frac{1}{3} & \boldsymbol{\frac{1}{3}}\\[6pt]
0 & 0 & \frac{1}{9} & \boldsymbol{\frac{2}{9}} & \frac{2}{9} & \frac{4}{9}\\[6pt]
\frac{1}{9} & 0 & \frac{1}{9} & \frac{2}{9} & \boldsymbol{\frac{1}{3}} & \frac{2}{9}\\[6pt]
\frac{1}{9} & 0 & 0 & \frac{4}{9} & \frac{2}{9} & \boldsymbol{\frac{2}{9}}
\end{matrix}\right]$
    \centering
    \caption{A sample matrix illustrating Theorem \ref{T2}, with $h=3$ and $D=2$ is shown.}
    \label{fig:1}
\end{figure}

Now we introduce some definitions relating to the network graph.

\theoremstyle{definition}
\begin{definition}{Complete Graph:}
A complete graph is a graph with vertex set $V$, and edge set $E'$, such that $\forall i,j, i \neq j: (i,j) \in E'$. The graph contains $b$ byzantine nodes, $h$ honest nodes, and $\|V\| = m$.
\end{definition}

A complete graph describes the de-facto communication during an entire phase: since the longest path between any two honest nodes is at most $D$, any two nodes may communicate with each other for at least one iteration during every single phase. We now introduce a graph that represents the network graph after the trimming in (\ref{update}).

\theoremstyle{definition}
\begin{definition}{Reduced Graph:}
A reduced graph is a complete graph with all nodes in set $B$ removed, along with their incoming and outgoing edges. Additional, we remove any arbitrary set of $b$ incoming edges from each remaining node. Note that there are several reduced graph for every complete graph, but only a finite number of them. Define $R_{f}$ to be the set of all reduced graphs for a given complete graph, and define $r$ as $\| R_{f} \|$. Note that this definition comes from \cite{Transition}. 

Note that even though nodes do not have edges connecting to themselves, they can also "send" messages to themselves. Thus the adjacency matrix $A$ of a reduced graph will always be non-zero at $A_{ii}, \forall i\in V$
\end{definition}

The reduced graph represents the communication links between the entire graph after trimming is done.

We introduce one last theorem describing the qualities of transition matrix $M$. 
\begin{theorem} \label{T3}
   Every Node $i$ row contains a non-zero value in column $z$, where $z \bmod h$ is a node with an incoming edge of node $i$ in some reduced graph in $R_{f}$
\end{theorem}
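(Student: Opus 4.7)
The plan is to reduce Theorem~\ref{T3} to a statement about the first Node~$i$ row (row~$i$ with $i<h$) and then propagate it to every other Node~$i$ row via Theorem~\ref{T1}. Since Theorem~\ref{T1} guarantees that any column in which row~$i$ is non-zero stays non-zero in every Node~$i$ row $z\geq h$, it suffices to prove the claim only for $t=i$.

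For $t=i$ I would unfold Algorithm~\ref{construction}: for every $k\in V$ the algorithm computes $iter=-dist(i,k)$ (or $iter=-1$ in the self-case $k=i$), which is always strictly negative, so we always enter the $iter<0$ branch and add $G_{ik}$ directly to column $j=(iter \bmod D)\,h + k$. Because $j\bmod h=k$, each non-zero $G_{ik}$ immediately yields a non-zero entry of row~$i$ in a column $z$ with $z\bmod h=k$. The theorem therefore reduces to showing that the set of $k$ with $G_{ik}\neq 0$ coincides with the in-neighborhood of $i$ in some reduced graph $R\in R_f$. In Case~1 ($f=0$) this is essentially immediate: $G$ is Vaidya's Case~1 matrix built on $N_i^*\subseteq H$, so the support of $G_{i\cdot}$ is $N_i^*\cup\{i\}$, which is precisely $i$'s in-neighborhood in the reduced graph obtained by trimming the $b$ incoming edges at $i$ outside $N_i^*$.

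The main obstacle will be Case~2, where Vaidya's construction rewrites each of the $f$ faulty entries in $N_i^*$ as a convex combination of an honest $L^*$ value and an honest $S^*$ value. The support of $G_{i\cdot}$ then sits on $(N_i^*\cap H)\cup L^*\cup S^*\cup\{i\}$, and I need to verify that this honest set is realizable as the in-neighborhood of $i$ in some valid reduced graph, i.e., that one can pick exactly $b$ incoming edges at $i$ to trim so that the surviving in-neighbors coincide with this set. This comes down to a counting check (the $f$ faulty entries of $N_i^*$ are exchanged for the $2X$ honest entries of $L^*\cup S^*$ drawn from $L\cup S$, with the remaining entries of $L\cup S$ absorbed into the trimming) together with a verification that the resulting choice of trimmed edges is admissible under the definition of $R_f$. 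Once this bookkeeping is settled, combining the direct placement of the $G_{ik}$'s in row~$i$ with Theorem~\ref{T1} delivers the theorem for every Node~$i$ row.
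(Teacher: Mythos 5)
Your proposal follows essentially the same route as the paper: the base case (rows $t$ with $\lfloor t/h\rfloor = 0$, where $iter<0$ always holds and each $G_{ik}$ lands in a distinct column congruent to $k$ mod $h$) is handled by reading off the support of Vaidya's matrix $G$ in Cases 1 and 2 and matching it to the in-neighborhood of $i$ in some reduced graph, and all remaining Node $i$ rows are handled by Theorem \ref{T1} — this is precisely the paper's proof of Theorem \ref{T3} combined with Appendix \ref{appendix A}. The Case 2 ``bookkeeping'' you flag as the main obstacle is exactly the cardinality count $\left\|(N_i^*\cap H)\cup\{i\}\cup L^*\cup S^*\right\| = h-b+1$ that the paper carries out (tersely, by citation to Vaidya) in Appendix \ref{appendix A}, so your outline matches the paper's argument at essentially the same level of detail.
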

\begin{proof}
    For rows $t$ such that $\lfloor\frac{t}{h}\rfloor = 0$ (which correspond to the updates of node $i$ in iteration 0), we note that every single node received some message from every other node from the previous phase. The reduced graph represents some form of trimming of all incoming information, where an incoming edge of the reduced graph represents an incoming state that is not trimmed. Column $z \bmod h \cong j$, column $z$ represents a state of node $j$. The theorem is proven by induction. 
    
    \paragraph{Base Step}
    The base step is proven in \ref{appendix A}
    
    \paragraph{Induction Step}
    For rows $t$ such that $\lfloor\frac{t}{h}\rfloor > 0$, the result generalizes from Theorem \ref{T1}
\end{proof}

This theorem describes how every state update in the matrix is based off of at least one weight corresponding to each node in the graph.

\subsubsection{Transition Matrix Behavior}
To prove convergence, we show that a finite number of matrices in the transition matrix product forms a non-zero column in the stochastic matrix (scrambling matrix). This guarantees that the matrix $\lim_{T\to\infty} \prod_{t=0}^{T} M[t]$ has identical rows, which in turn ensures the convergence condition \cite{Transition, Wolfowitz}.

To do this, we introduce a repeated matrix product $Q[t]$, which represents a repeated product of $2rD+1$ matrices. Specifically, we define the following:
\begin{equation}
    Q[t] = \prod_{i=t(2rD+1)}^{(t+1)(2rD+1)} M[i]
\end{equation}

We may write our transition matrix update thus as:
\begin{equation}
    v[T] = \prod_{t=0}^{T / (2rD+1)} Q[t]*v[0]
\end{equation}

Now we introduce a new theorem that explains the behavior of transition matrices. 
 
 \theoremstyle{definition}
\begin{definition}{Matrix Inequality:}
We define matrices $A < B$ iff $\forall i, j: A_{ij} < B_{ij}$.
\end{definition}

\begin{theorem} \label{T4}
   $\forall t$, matrix $Q[t]$ contains at least one non-zero column within the last $n$ columns of the matrix.
\end{theorem}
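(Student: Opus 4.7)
The plan is to exploit the combinatorial structure of reduced graphs together with the structural theorems already established. Since $Q[t]$ is a product of $2rD+1$ transition matrices and there are only $r = \|R_f\|$ distinct reduced graphs, a pigeonhole argument shows that at least one reduced graph $H^{*} \in R_f$ corresponds to matrices appearing at least $2D+1$ times in the product. I interpret ``last $n$ columns'' as the rightmost $h$ columns, which encode the states of honest nodes at iteration $D-1$ of the previous phase and are therefore the natural target because Theorem \ref{T2} already pins non-zero entries on their iteration-shifted diagonal.

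First, I would formalize how a single matrix multiplication propagates non-zero entries. By Theorem \ref{T3}, each Node $i$ row of any $M[s]$ carries non-zero weight on columns indexed by in-neighbors of $i$ in the associated reduced graph; by Theorem \ref{T1}, non-zero entries in the first $h$ rows propagate down to all Node $i$ rows within the phase block. Combined with Theorem \ref{T2}, this means that multiplying any $M[s]$ against a matrix whose rightmost $h$ columns already contain a non-zero entry in some row $i$ yields a product whose rightmost $h$ columns contain a non-zero entry in every row corresponding to an in-neighbor of $i$ in the reduced graph used at step $s$.

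Second, I would iterate this propagation argument along the $2D+1$ copies of $H^{*}$ inside $Q[t]$. Every reduced graph built from a bidirectionally-connected honest subgraph contains a non-empty ``source component'' from which every honest node is reachable, and by the assumption that the honest subgraph has diameter at most $D$, the reach expands by at least one honest node per $H^{*}$-multiplication until it saturates all $h$ honest nodes. Because the intervening matrices in $Q[t]$ are stochastic and, by Theorems \ref{T1} and \ref{T2}, preserve the ``last-iteration diagonal'' of non-zero entries in the rightmost $h$ columns, no intermediate multiplication can destroy the accumulated non-zero pattern; it can only redistribute and spread it further. The factor of $2D+1$ occurrences (rather than $D$) absorbs the additional slack needed to handle the intra-phase iteration shifts introduced by Theorem \ref{T1}, so that the non-zero propagation completes across all $D$ iteration-rows, not just the first.

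The main obstacle will be the careful bookkeeping to show that the non-zero entries seeded on the diagonal of the rightmost $h$ columns by Theorem \ref{T2} survive the interleaving of $H^{*}$-matrices with matrices drawn from other reduced graphs in $R_f$, and that the particular column that becomes fully non-zero at the end of the saturation argument can be chosen uniformly across all $2D+1$ $H^{*}$-occurrences rather than drifting as the product grows. Once that bookkeeping is in place, the existence of at least one non-zero column within the last $h$ columns of $Q[t]$ follows directly, which is what is needed for the subsequent scrambling-matrix and Wolfowitz-style convergence argument.
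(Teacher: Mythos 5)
Your proposal follows essentially the same route as the paper: pigeonhole over the finitely many reduced graphs, existence of a source component (the paper's Appendix B), preservation of the rightmost-$h$-columns diagonal via Theorem \ref{T2}, and a final multiplication by one more transition matrix to turn the partial non-zero column into a full one. Two points of divergence are worth noting. First, the paper does not pigeonhole over the $2rD+1$ individual matrices; it first pairs consecutive matrices (Lemma \ref{L1}) so that each \emph{pair} deposits a reduced-graph adjacency pattern in the bottom-right $h \times h$ block --- this pairing is the device that collapses the scattered non-zero entries of a single transition matrix (which sit in different iteration blocks, since node $i$ receives node $k$'s state from iteration $t - dist(i,k)$) into the last $h$ columns --- and only then pigeonholes over the resulting $rD$ blocks to obtain $D$ repetitions of one reduced graph. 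Your direct pigeonhole on singles can be made to work, but you then have to re-derive the localization that Lemma \ref{L1} provides. Second, your saturation step as justified (``the reach expands by at least one honest node per $H^{*}$-multiplication until it saturates'') would require up to $h-1$ repetitions, which a budget of roughly $2D+1$ occurrences does not supply when $h-1 > 2D+1$ (recall $D$ can be as small as $O(\log h)$). The argument that actually closes within the budget is the one the paper uses: the source component reaches every node of the reduced graph along directed paths of length at most $D$, so $D$ applications of the same adjacency pattern already produce the non-zero column. With that substitution, and with the drift concern you raise resolved exactly as you suspect (the Theorem \ref{T2} diagonal pins the column index), your outline matches the paper's proof.
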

\begin{proof}
   We introduce Lemma \ref{L1}, which explains how the product of two transition matrices creates a $h \times h$ adjacency matrix of a reduced graph at the bottom right corner of the matrix.
 
 \begin{lemma} \label{L1}
    Define $M^1 * M^2$ to be two arbitrary transition matrices constructed from Algorithm \ref{construction}. $M^1 * M^2[h(D-1) + 1, hD : h(D-1) + 1, hD] \geq \beta*r_{f}$, where $r_{f}$ is the adjacency matrix of some reduced graph and $\beta$ is some positive constant \cite{Transition}.
 \end{lemma}
 \begin{proof}
    The proof is derived from Theorem \ref{T3}. Within any row of matrix $M^1$, if there is a non-zero value in column $z$, where $z \bmod h \cong j$, then there will be a non-zero value in column $h(D-1) + j$ of matrix $M^1 * M^2$. This is due to diagonals of Theorem \ref{T2} in matrix $M^2$.
 \end{proof}
 
 We now use a key result of \cite{Transition} to show the generation of a partial non-zero column in this bottom-right matrix.
 
 \begin{lemma} \label{L2}
    The product of $2rD$ arbitrary transition matrices (denoted as $Z$) results in a non-zero column of the matrix $Z[h(D-1) + 1, hD : h(D-1) + 1, hD]$.
 \end{lemma}
 \begin{proof}
     From Lemma \ref{L1}, the product of $2rD$ transition matrices can be represented as a product of $rD$ matrices, where each matrix has some arbitrary adjacency matrix of a reduced graph in its bottom rightmost $h \times h$ submatrix. Given that $\|R_{f}\| = r$, by the pigeonhole principle, we can conclude that at least one arbitrary reduced graph is repeated at least $D$ times. In Appendix \ref{appendixB}, it is also proven that there exists a directed path from some node to all other nodes in all reduced graphs. Since the shortest directed path of between any two honest nodes is at most length $D$, it is proven in \cite{Transition} that a non-zero column is formed in matrix $Z[h(D-1) + 1, hD : h(D-1) + 1, hD]$
 \end{proof}
 
 Given that matrix $Z$ has a non-zero column in the bottom rightmost $h \times h$ matrix, it can be shown that $Z*M$, where $M$ is an arbitrary transition matrix, creates a non-zero column of the entire matrix in the same column where the $h \times h$ matrix column was. This is once again due to the diagonals of Theorem \ref{T2} and a simple application of linear algebra.
 
 Given that $Z$ represents the product of $2rD$ matrices and $M$ represents a single transition matrix, we have shown that the product of $2rD+1$ transition matrices creates a non-zero column. This concludes the proof of Theorem \ref{T4}.
\end{proof}

\begin{figure}

$\left[\begin{matrix}
0 & \frac{1}{3} & \frac{1}{3} & \frac{1}{3} & 0 & 0\\[6pt]
0 & 0 & 0 & \frac{1}{3} & \frac{1}{3} & \frac{1}{3}\\[6pt]
\frac{1}{3} & 0 & \frac{1}{3} & 0 & 0 & \frac{1}{3}\\[6pt]
0 & 0 & \frac{1}{9} & \frac{2}{9} & \boldsymbol{\frac{2}{9}} & \frac{4}{9}\\[6pt]
\frac{1}{9} & 0 & \frac{1}{9} & \frac{2}{9} & \boldsymbol{\frac{1}{3}} & \frac{2}{9}\\[6pt]
\frac{1}{9} & 0 & 0 & \frac{4}{9} & \boldsymbol{\frac{2}{9}} & \frac{2}{9}
\end{matrix}\right]$
    \centering
    \caption{A sample matrix $Z$ with a non-zero column in the bottom rightmost $h \times h$ matrix.}
    \label{fig:2}
\end{figure}

\begin{figure}
$\left[\begin{matrix}
0 & 0 & \frac{1}{3} & \frac{1}{3} & \boldsymbol{\frac{1}{3}} & 0\\[6pt]
0 & 0 & 0 & \frac{1}{3} & \boldsymbol{\frac{1}{3}} & \frac{1}{3}\\[6pt]
\frac{1}{3} & 0 & 0 & 0 & \boldsymbol{\frac{1}{3}} & \frac{1}{3}\\[6pt]
0 & 0 & \frac{1}{9} & \frac{2}{9} & \boldsymbol{\frac{2}{9}} & \frac{4}{9}\\[6pt]
\frac{1}{9} & 0 & \frac{1}{9} & \frac{2}{9} & \boldsymbol{\frac{1}{3}} & \frac{2}{9}\\[6pt]
\frac{1}{9} & 0 & 0 & \frac{4}{9} & \boldsymbol{\frac{2}{9}} & \frac{2}{9}
\end{matrix}\right]$
    \centering
    \caption{A sample matrix $Z*M$ with a non-zero column in the entire matrix in the same column as the partial non-zero column of Figure \ref{fig:2} above.}
    \label{fig:3}
\end{figure}

\begin{theorem} \label{T7}
   $\lim_{T\to\infty} v[T] = c* \textbf{1}$, where $c$ is some constant and $\textbf{1}$ is the column vector of ones. Note this proves the convergence condition.
\end{theorem}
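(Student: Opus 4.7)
The plan is to derive Theorem \ref{T7} by applying the classical Wolfowitz-style theory of backward products of scrambling stochastic matrices, exactly as used in \cite{Transition, Wolfowitz}. A stochastic matrix is called scrambling if, for every pair of rows, there exists at least one column in which both entries are strictly positive; any stochastic matrix possessing a fully non-zero column is automatically scrambling. Theorem \ref{T4} gives precisely this: each block $Q[t]$ contains a non-zero column, so each $Q[t]$ is a scrambling $hD \times hD$ stochastic matrix.

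First, I would establish a uniform positive lower bound on the non-zero entries of $Q[t]$. This relies on the observation that, although $t$ ranges over infinitely many phases, the set of transition matrices that can appear is finite: at each iteration the construction in Algorithm \ref{construction} depends only on the honest subgraph, on $f$, and on the finitely many choices of $L^{*}$, $S^{*}$, and the reduced graph in $R_f$. Consequently the entries of any $M[t]$ lie in a finite set of rationals, and so do the entries of the product $Q[t]$ of $2rD+1$ such matrices. Let $\gamma > 0$ be the minimum over all non-zero entries that can ever appear in any realizable $Q[t]$.

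Next, I would invoke the Wolfowitz convergence theorem: if $\{Q[t]\}_{t \ge 0}$ is a sequence of scrambling stochastic matrices whose non-zero entries are uniformly bounded below by $\gamma$, then the coefficient of ergodicity $\tau(Q[t])$ is uniformly bounded away from $1$, and the backward product $\prod_{t=0}^{N} Q[t]$ converges exponentially fast (in $N$) to a rank-one stochastic matrix $\mathbf{1}\pi^{\top}$ for some probability vector $\pi$. Since $\prod_{t=0}^{T} M[t]$ equals some $\prod_{t=0}^{N} Q[t]$ left-multiplied by at most $2rD$ additional stochastic matrices, the same rank-one limit is inherited. Thus $\lim_{T\to\infty} \prod_{t=0}^{T} M[t] = \mathbf{1}\pi^{\top}$, and therefore
\begin{equation}
    \lim_{T\to\infty} v[T] \;=\; \mathbf{1}\pi^{\top} v[0] \;=\; c\,\mathbf{1},
\end{equation}
where $c = \pi^{\top} v[0]$ is a convex combination of the initial honest states, consistent with the validity proof.

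The main obstacle I anticipate is not the Wolfowitz application itself, which is standard, but rather carefully justifying the uniform lower bound $\gamma$ on the entries of $Q[t]$: one must check that the construction in Algorithm \ref{construction} only ever produces matrices drawn from a finite family, despite the unbounded number of phases and despite the byzantine adversary's freedom. Everything else is essentially bookkeeping: Theorem \ref{T4} supplies the scrambling structure, and the finite-family argument supplies the entry bound, so that the theorem of \cite{Wolfowitz} applies verbatim and delivers the single-constant limit required.
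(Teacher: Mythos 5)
Your proposal follows essentially the same route as the paper: decompose the product of $M[t]$ into blocks $Q[t]$, use Theorem \ref{T4} to conclude each $Q[t]$ is scrambling, and invoke the Wolfowitz-type result from \cite{Transition, Wolfowitz} that an infinite product of scrambling stochastic matrices converges to a rank-one matrix. Your treatment is in fact more careful than the paper's, since you explicitly address the uniform lower bound on non-zero entries and the leftover factors when $T$ is not a multiple of $2rD+1$, both of which the paper leaves implicit.
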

\begin{proof}
    \begin{align*}
       \lim_{T\to\infty} v[T] = \\
       \lim_{T\to\infty} \prod_{t=0}^{T} M[t]*v[0] = \\
       \lim_{T\to\infty} \prod_{t=0}^{T / (2rD+1)} Q[t]*v[0]
    \end{align*}

    From Theorem \ref{T4}, we have shown that $\forall t$, matrix $Q[t]$ contains at least one non-zero column. Since $Q[t]$ is also stochastic, it is a scrambling matrix. \cite{Transition} proves that the product of any infinite number of scrambling matrices converges to a matrix with identical rows. Thus the product $\prod_{t=0}^{T / (2rD+1)} Q[t]*v[0]$ results in a column vector with identical elements, proving the theorem.
\end{proof}

\section*{Acknowledgements}
First of all, I would like to thank MIT and the MIT PRIMES program for giving me this wonderful opportunity to conduct research.

I'd like to thank Jun Wan, Prof. Lili Su, and Prof. Nitin Vaidya for all of our discussions. Finally, the biggest thanks goes out to my research mentor, Hanshen Xiao, for all of his support and guidance.

\bibliographystyle{IEEEtran}
\bibliography{main.bib}

\begin{thebibliography}{10}
\providecommand{\url}[1]{#1}
\csname url@samestyle\endcsname
\providecommand{\newblock}{\relax}
\providecommand{\bibinfo}[2]{#2}
\providecommand{\BIBentrySTDinterwordspacing}{\spaceskip=0pt\relax}
\providecommand{\BIBentryALTinterwordstretchfactor}{4}
\providecommand{\BIBentryALTinterwordspacing}{\spaceskip=\fontdimen2\font plus
\BIBentryALTinterwordstretchfactor\fontdimen3\font minus
  \fontdimen4\font\relax}
\providecommand{\BIBforeignlanguage}[2]{{%
\expandafter\ifx\csname l@#1\endcsname\relax
\typeout{** WARNING: IEEEtran.bst: No hyphenation pattern has been}%
\typeout{** loaded for the language `#1'. Using the pattern for}%
\typeout{** the default language instead.}%
\else
\language=\csname l@#1\endcsname
\fi
#2}}
\providecommand{\BIBdecl}{\relax}
\BIBdecl

\bibitem{ByzantineGeneral}
L.~Lamport, R.~Shostak, and M.~Pease, ``The byzantine generals problem,''
  \emph{ACM Transactions on Programming Languages and Systems}, vol.~4, no.~3,
  pp. 382--401, 1982.

\bibitem{Blockchain}
M.~Correia, \emph{Essentials of Blockchain Technology}.\hskip 1em plus 0.5em
  minus 0.4em\relax CRC Press, 2019.

\bibitem{ByzantineML}
P.~Blanchard, E.~M.~E. Mhamdi, R.~Guerraoui, and J.~Stainer,
  ``Byzantine-tolerant machine learning,'' 2017.

\bibitem{Dolev}
D.~Dolev, S.~S. Pinter, E.~W. Stark, and W.~E. Weihl, ``Reaching approximate
  agreement in the presence of faults,'' \emph{Journal of the Association for
  Computing Machinery}, vol.~33, no.~3, pp. 499--516, 1986.

\bibitem{Lynch}
M.~J. Fischer, N.~A. Lynch, and M.~S. Paterson, ``Impossibility of distributed
  consensus with one faulty process,'' \emph{Journal of the Association for
  Computing Machinery}, vol.~32, no.~2, pp. 374--382, 1985.

\bibitem{IABCPart1}
N.~Vaidya, L.~Tseng, and G.~Liang, ``Iterative approximate byzantine consensus
  in arbitrary directed graphs,'' 2012.

\bibitem{Transition}
N.~Vaidya, ``Matrix representation of iterative approximate byzantine consensus
  in directed graphs,'' 2012.

\bibitem{DolevStrong}
D.~Dolev and H.~R. Strong, ``Authenticated algorithms for byzantine
  agreement,'' \emph{SIAM Journal on Computing}, vol.~12, no.~4, pp. 656--666,
  1983.

\bibitem{Dishonest}
J.~Wan, H.~Xiao, E.~Shi, and S.~Devadas, ``Expected constant round byzantine
  broadcast under dishonest majority,'' in \emph{Theory of Crytography
  Conference}, 2020, pp. 381--411.

\bibitem{Relay}
Y.~Wang and R.~Wattenhofer, ``Asynchronous byzantine agreement in incomplete
  networks [technical report],'' 2020.

\bibitem{BRIDGE}
Z.~Yang and W.~U. Bajwa, ``Bridge: Byzantine-resilient decentralized gradient
  descent,'' 2019.

\bibitem{IABCPart2}
N.~Vaidya, L.~Tseng, and G.~Liang, ``Iterative approximate byzantine consensus
  in arbitrary directed graphs - part ii: Synchronous and asynchronous
  systems,'' 2012.

\bibitem{RandomGraph}
F.~Chung and L.~Lu, ``The diameter of sparse random graphs,'' \emph{Advances in
  Applied Mathematics}, vol.~26, no.~4, pp. 257--279, 2001.

\bibitem{Wolfowitz}
J.~Wolfowitz, ``Products of indecomposable, aperiodic, stochastic matrices,''
  \emph{Proceedings of the American Mathematical Society}, vol.~14, no.~5, pp.
  733--737, 1963.

\end{thebibliography}

\appendix
\section{Reduced Graph Proof} \label{appendix A}
For each row $i$ of an arbitrary transition matrix $M$, we seek to prove that at least $h-b+1$ elements in $M_i$ are lower bounded by some arbitrary positive constant $\beta$ \cite{Transition}. We do this through a proof by induction.

\subsection{Base Step: $\lfloor\frac{t}{h}\rfloor = 0$}
In Algorithm \ref{construction}, a value of $i$ such that $\lfloor\frac{i}{h}\rfloor = 0$ implies that row $i$ models a node's update in iteration 1 of the phase. This also implies that $\forall k, iter<0$.

$\forall k$, column $j$ is a unique value. This implies that through each iteration of the for-loop over all nodes, a unique column is being considered. We now consider two additional cases (whether matrix $G$ is Case 1 or Case 2 construction), and prove that they both satisfy the desired condition.

\subsubsection{Case 1}
$G_{ik}>\beta$ iff $k\in N_i^*\cup i$ \cite{Transition}. Since $N_i^* \subset V$, and $\|N_i^*\cup i\| = m-2b+1 = h-b+1$, we can conclude that at least $h-b+1$ elements of $M_i$ are lower-bounded by $\beta$.

\subsubsection{Case 2}
$G_{ik}>\beta$ iff $k\in (N_i^*\cap H)\cup i\cup L^* \cup S*$ \cite{Transition}. In \cite{Transition}, it is shown that $(N_i^*\cap H)\cup i\cup L^* \cup S* = \|m \cap H\| -b - 1$. Since $(N_i^*\cap H)\cup i\cup L^* \cup S* \subset V$, and $\|m \cap H\|-b-1 = h-b+1$, we have concluded the prove of the base case.

\subsection{Reduced Graph Inequality}
Each node in a reduced graph has a total of $m-2b$ incoming edges. When you include a nodes ability to communicate with itself, each row of the adjacency matrix of the reduced graph contains $m-2b+1 = h-b+1$ ones. We now note that from Algorithm \ref{construction}, $M_{tj}$ is non-zero only if $(t \bmod h, j \bmod h) \in E$, and that at most one non-zero value exists on some column $j$ for each value of $j \bmod h$. This concludes the proof of the base step of the theorem.

\section{Source Component Proof} \label{appendixB}
We define a source component of a graph as a node that has a directed path to every other path in a graph. In this section we prove that any arbitrary reduced graph contains at least one source component.

A reduced graph is constructed by removing $n$ incoming edges from each node of a fully connected directed graph of at least $2n+1$ nodes. In this proof, we assume that the reduced graph has exactly $2n+1$ nodes. The case where the number of nodes exceeds $2n+1$ is a simple generalization.

In a fully connected graph of $2n+1$ nodes, there are totally (2n+1)2n outgoing edges. Thus, after removing $n$ incoming edges (which are also outgoing edges of some other arbitrary node) from each node, there still exists at least $(2n+1)n$ outgoing edges left in the graph. By Pigeonhole Principle, at least one node in the reduced graph, let us denote it as $v_0$, has at least $n$ outgoing edges. 

Denote the set of all nodes with direct incoming edges from $v_0$ as set $S$. We know that $\|S\|\geq n$. For each of the nodes which are not in the set $S$ (there are no more than n nodes not in S), it is noted that each of them has $n$ incoming edges. 

Assume that none of these nodes have incoming edges from $v_0$ or any node in set $S$. Thus, they can only have edges from at most a total of $2n+1 - 2 - n = n-1$ nodes. However, it is known that all nodes have $n$ incoming edges. This is a contradiction. Thus, they must either have one incoming edge from a node in $S$, or an incoming edge from $v_0$. Therefore we have proven that $v_0$ is a source component.

\section{Transition Matrix $M$ is stochastic} \label{appendixC}
We prove that $M$ is stochastic through strong induction. Let $t$ be an arbitrary row of matrix $M$. We prove that the sum of all elements in $M_t$ is 1.

\subsection{Base Case: $\lfloor\frac{t}{h}\rfloor = 0$}
In Algorithm \ref{construction}, a value of $t$ such that $\lfloor\frac{t}{h}\rfloor = 0$ implies that row $t$ models a node's update in iteration 0 of the phase. This also implies that $\forall k, iter<0$. 

Throughout the entire for-loop over Node $k \in V$, only summation step \ref{sum1} is used. Each summation step adds $G_{ik}$ to the value of $\sum_{k=0}^{hD-1} M_{tk}$, for some $k$. Thus the entire for-loop adds a total of 
\begin{equation}
    \sum_{k=0}^{m-1} G_{ik}
\end{equation}
to $\sum_{k=0}^{m-1} G_{ik}$, split over multiple elements. Since $\sum_{k=0}^{m-1} G_{ik}=1$, $M_t$ is a stochastic vector.

\subsection{Base Case: $\lfloor\frac{t}{h}\rfloor > 0$}
Without loss of generality, assume that $\lfloor\frac{t}{h}\rfloor=\alpha$, for some $\alpha>0$. By strong induction, we assume that $M_g$ is a stochastic vector, for all $\lfloor\frac{g}{h}\rfloor<\alpha$.

Summation step (\ref{sum1}) adds $G_{ik}$ to the value of $\sum_{k=0}^{hD-1} M_{tk}$, for some $k$. However, since $\lfloor\frac{j}{h}\rfloor<\lfloor\frac{t}{h}\rfloor=\alpha$, we know that $M_{jv}$ is a stochastic vector. Therefore summation step (\ref{sum2}) adds $G_{ik}$ to the value of $\sum_{k=0}^{hD-1} M_{tk}$, for some $k$. Therefore once again, the entire for-loop adds a total of 
\begin{equation}
    \sum_{k=0}^{m-1} G_{ik}
\end{equation}
to $\sum_{k=0}^{m-1} G_{ik}$, split over multiple elements. Similar to the base case, this proves that $M_t$ is a stochastic vector, completing the proof.

\end{document}